\newcommand\blfootnote[1]{%
  \begingroup
  \renewcommand\thefootnote{}\footnote{#1}%
  \addtocounter{footnote}{-1}%
  \endgroup
}
\newtheorem{Theorem}{Theorem}
\newtheorem{Proposition}[Theorem]{Proposition}
\newtheorem{Lemma}[Theorem]{Lemma}
\theoremstyle{definition}
\DeclareMathOperator*{\argmin}{arg\,min}
\begin{document}

\title{ On Optimal Multi-user Beam Alignment \\     in Millimeter Wave Wireless Systems}

\author{Abbas Khalili$^\dagger$, Shahram Shahsavari$^\ddagger$,Mohammad A. (Amir) Khojastepour$^\diamond$, Elza Erkip$^\dagger$\\
 $^\dagger$NYU Tandon School of Engineering,  $^\ddagger$University of Waterloo, $^\diamond$NEC Laboratories America, Inc. \\
Emails: $^\dagger$\{ako274, elza\}@nyu.edu, $^\ddagger$shahram.shahsavari@uwaterloo.ca,
$^\diamond$amir@nec-labs.com }

\maketitle
\begin{abstract}
Directional transmission patterns (a.k.a. \textit{narrow beams}) are the key to wireless communications in millimeter wave (mmWave) frequency bands which suffer from high path loss and severe shadowing. In addition, the propagation channel in mmWave frequencies incorporates only a few number of spatial clusters requiring a procedure to align the corresponding narrow beams with the angle of departure (AoD) of the channel clusters. The objective of this procedure, called \textit{beam alignment} (BA) is to increase the beamforming gain for subsequent data communication.
Several prior studies consider optimizing BA procedure to achieve various objectives such as reducing the BA overhead, increasing throughput, and reducing power consumption. While these studies mostly provide optimized BA schemes for scenarios with a single active user, there are often multiple active users in practical networks. Consequently, it is more efficient in terms of BA overhead and delay to design multi-user BA schemes which can perform beam management for multiple users collectively. This paper considers a class of multi-user BA schemes where the base station performs a one shot scan of the angular domain to simultaneously localize multiple users.
The objective is to minimize the average of expected width of remaining uncertainty regions (UR) on the AoDs after receiving users' feedbacks. Fundamental bounds on the optimal performance are analyzed using information theoretic tools. Furthermore, a BA optimization problem is formulated and a practical BA scheme, which provides significant gains compared to the \textit{beam sweeping} used in 5G standard, is proposed. \blfootnote{This work is supported by
National Science Foundation grants EARS-1547332, 
SpecEES-1824434, and NYU WIRELESS Industrial Affiliates.}
\end{abstract}

\section{Introduction}
Millimeter wave (mmWave) frequency bands provide large bandwidth which can be used to achieve multi-Gbps throughputs in next generation wireless networks \cite{mmWave-survey-nyu}. High path loss and intense shadowing are among the major obstacles to increase data rate in such high frequency bands \cite{rappaport2011state}. To overcome these effects, several beamforming (BF) techniques have been proposed to concentrate the transmitted energy toward the point of interest using directional transmission patterns, i.e., narrow beams \cite{kutty2016beamforming}. On the other hand, experimental studies reveal that mmWave channel incorporates a few spatial clusters due to wave propagation properties in mmWave frequencies \cite{akdeniz2014millimeter}. Therefore, it is crucial to devise beam alignment (BA) techniques (also known as \textit{beam search} and \textit{beam training}) to align the narrow transmission beams with the direction of channel, i.e., the angle of departure (AoD) associated with channel clusters \cite{giordani2018tutorial}.

A slight misalignment between the beam direction and the AoD of the channel can diminish the BF gain required for high data rates when using narrow beams \cite{Lee2018,nitsche2015steering,Shah1906:Robust}. As a result, a wide range of BA techniques have been proposed to find the best beam direction \cite{barati2016initial,giordani2016comparative,desai2014initial,hussain2017throughput,shah-pimrc,Shah1906:Robust,khosravi2019efficient,chiu2019active,shabara2018linear,klautau20185g,Song2019}. These techniques can be used in the initial access procedure where the users try to connect to the base station (BS) and get access to network resources \cite{barati2016initial,giordani2016comparative}, or after initial access for beam tracking where the goal is to adjust the beam directions to compensate for the changes in angle of arrival (AoA) or AoD caused by mobility or the propagation environment \cite{Shah1906:Robust,michelusi2018optimal}. \textit{Exhaustive search} (ES) algorithms (also known as \textit{beam sweeping}) scan various directions  using a set of beams (usually from a codebook), and choose the beam with the highest received power \cite{barati2016initial,giordani2016comparative}. To reduce the overhead incurred by ES, \textit{hierarchical search} (HS) algorithms are proposed which search wider sectors first using coarser beams, and then refine the search within the best sector   \cite{desai2014initial,hussain2017throughput}.

BA schemes can be categorized into two classes, namely \textit{non-interactive} BA (NI-BA) and \textit{interactive} BA (I-BA) methods. In NI-BA, the scanning phase occurs first by the transmitter and the receiver's feedback is sent after performing all the scans. ES is an example for NI-BA. In I-BA, on the other hand, the receiver sends a number of feedback messages during the scanning phase and the transmitter uses that information in the subsequent scans. An example of I-BA is HS.

BA procedure (i.e. the angular span and location of the probed sectors) can be optimized to achieve various objectives such as minimizing power consumption \cite{michelusi-efficient,hussain2017energy} or maximizing average throughput \cite{shah-pimrc,hussain2017throughput}.
While I-BA can potentially lead to a higher performance compared to NI-BA \cite{michelusi2018optimal}, most of the optimized I-BA algorithms proposed in the literature consider the scenario with a single user  \cite{hussain2017throughput,shah-pimrc} or two users \cite{hassan2018multi} and it is not straightforward to generalize them for multi-user settings. Considering there are often multiple users in practical networks, it is more efficient in terms of BA overhead and delay to devise optimized BA strategies which adjust the beam for multiple users simultaneously. Moreover, optimizing NI-BA is more tractable in multi-user scenarios as the scanning phase is independent of the users' feedback and can be performed once for all users.

In this paper, we investigate the problem of optimal single-cell multi-user NI-BA where the users have omnidirectional transmission and reception patterns during the BA phase and the BS can steer the beams using a massive antenna array in mmWave bands. We formulate an optimization problem to find the NI-BA procedure that minimizes the weighted average expected widths of remaining uncertainty regions (UR) on the AoDs where the weights are chosen based on user priorities. We provide information theoretic upper and lower-bounds on the optimal performance and formulate optimization problems both for unconstrained beams and more practical scenario of contiguous beams. We show that our proposed schemes, which are given by the solution to the optimization problems, provide significant gains compared to the ES scheme used in 5G standard. 

\vspace*{-0.2cm}
\section{System Model and Preliminaries} 
\label{sec:sys}
We consider the downlink of a single-cell mmWave scenario including
a BS and $N$ users.
We consider the case where the propagation channel between the BS and the users consists of one path (spatial cluster). This assumption has been adopted in several prior studies  \cite{hussain2017throughput,michelusi-efficient}, and is supported by experimental studies \cite{akdeniz2014millimeter}.
We assume that the channels are stationary in the time interval of interest. Let $\Psi_j, j\in[N]$ \footnote{We denote by $[N]$ the set of integers from 1 to $N$.} denote the random AoD corresponding to the channel from the BS to the $j^{th}$ user. We consider an arbitrary probability distribution function (PDF) $f_{\Psi_j}(\cdot)$ for $\Psi_j$. This distribution reflects the prior knowledge about the AoD, which for example could correspond to the previously localized AoD in beam tracking applications. We assume that the support of the PDF $f_{\Psi_j}(\cdot)$ is $(0, 2\pi]$ for all $j\in[N]$.

We assume that the BS has a massive antenna array as envisioned for mmWave communications \cite{mmWave-survey-nyu} while the user has an omni-directional transmission and reception pattern for the BA phase similar to \cite{hussain2017throughput, hassan2018multi}. Furthermore, as in \cite{hussain2017throughput,hussain2019energy}, we consider a single RF chain along with analog BF at the BS due to practical considerations such as power consumption. To model the directionality of the BS transmission due to BF, we adopt a \textit{sectored antenna} model from \cite{ramanathan2001performance}, characterized by two parameters: a constant main-lobe gain and the angular coverage region (ACR) which is the union of the angular interval(s) covered by the main-lobe. This model has been adopted in several other studies to model the BF gain of antenna arrays (e.g. \cite{bai2015coverage,fund2017,fund2016}) or the radiation pattern of the directional antennas (e.g. \cite{shah2018,shah2017}). We neglect the effect of the side-lobes. While this ideal model is considered for theoretical tractability, modifications such as the ones presented at \cite{shah-pimrc} may be applied to generalize the antenna model for practical scenarios where the beam pattern roll-off is not sharp. 
\begin{figure}[t]
\centering
\includegraphics[width=0.7\linewidth, draft=false]{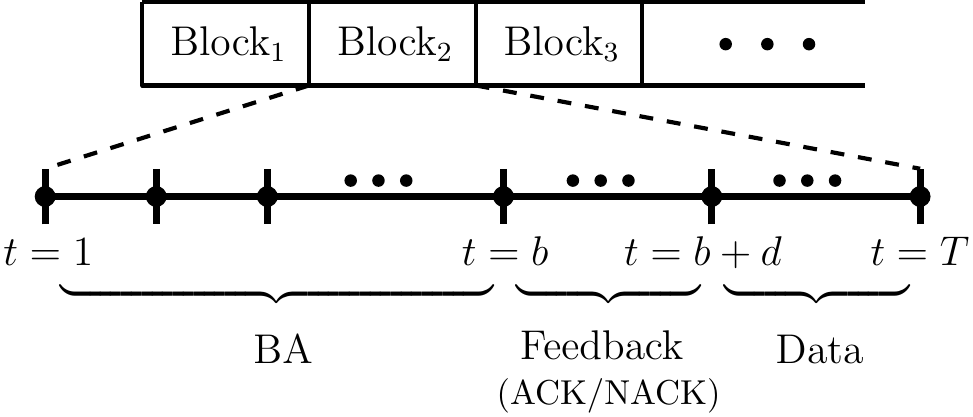}
\caption{ Time slotted system.}
\label{fig:sys_model}
\vspace*{-0.7cm}
\end{figure}

We  consider  a  time  division  duplex  (TDD)  protocol in which time is divided into blocks of $T$ time slots as depicted in Fig. \ref{fig:sys_model}. At each block, the first $b+d$ slots are used for NI-BA where the BS uses the first $b$ slots to scan the angular space and the next $d$ slots are used to collect the users' feedback (including the processing delays). The final $T-b-d$ slots are used for data communications. In this paper, we focus on the BA phase whose main objective is to localize the users' AoDs as accurately as possible such that narrower beams (which are translated to higher BF gains) can be used to serve the users during data communication phase. At each time slot $i\in[b]$ during the BA phase, the BS transmits a probing packet using a beam with ACR $\Phi_i$ to scan that angular region. While the users are silent during the first $b$ time slots, in the next $d$ slots, each user sends a feedback message to the BS (e.g. through low-frequency control channels \cite{mmWave-survey-nyu}) including the indices of the beams whose corresponding probing packets are received by that user. Therefore, this feedback message can determine if the AoD of the user belongs to the ACR of each beam, considered as Acknowledgment (ACK), or not, considered as negative ACK (NACK). In this paper, we assume that the packets used for scanning during BA are received without noise at the users. The more general case which includes the effect of noise is left for future research. Similarly, we assume that the feedback received at the BS is without any error \cite{hussain2017throughput,shah-pimrc,Shah1906:Robust}.
Based on the feedback, the BS infers an UR for the AoD of the channel of each user defined by $B(\Psi_j)$ for $j^{{\rm th}}$ user. During the data communication phase, a beam with an ACR covering the UR is used. For a probing beam with ACR $\Phi_i$, let $\Theta_{i}(\Psi_j) = \Phi_i$ if $\Psi_j \in \Phi_i$ (ACK), and $\Theta_{i}(\Psi_j) = (0,2\pi] - \Phi_i$ otherwise (NACK). Then, we have
\begin{align}
\label{eq:phis}
B(\Psi_j) = \cap_{i = 1}^b \Theta_{i}(\Psi_j), j\in[N].
\end{align}

For any BA strategy with a given set of scanning beams $\{\Phi_i\}_{i=1}^b$, let $\{u_k\}_{k=1}^{M}$ denote the set of all possible values for $B(\Psi_j)$, for all $j\in[N]$. The set $\{u_k\}_{k=1}^{M}$ does not depend on the users and is only a property of $\{\Phi_i\}_{i=1}^b$. Note that $M\leq 2^b$ and its exact value depends on the set of scanning beams. It can be shown that $u_k$'s partition the interval $(0,2\pi]$, i.e. $u_k\cap u_{k'}=\varnothing, \forall k, k'$ and $\cup_{k=1}^M u_k =[0, 2\pi)$. Given a set of weights $\{w_j\geq 0\}_{j=1}^N, \sum_{j=1}^N w_j = 1$ for users and partition $\{u_k\}_{k=1}^M$, we define the weighted average of expected width of users' URs as
\vspace*{-0.2cm}
\begin{align}
\label{eq:obj}
\overline{U}\left(\{\Phi_i\}_{i=1}^b\right)=&\sum_{j=1}^N w_j\mathbb{E}[|B(\Psi_j)|],\!\! \quad \!\!
\end{align}
\vspace*{-0.4cm}
\begin{align}
\label{eq:part}
\text{where,} \qquad &\mathbb{E}[|B(\Psi_j)|] = \sum_{k=1}^{M} |u_k|\int_{\psi\in u_k}\!\! f_{\Psi_j}(\psi) d\psi,\!\qquad  \\
&B(\psi) = 
u_k \quad \mathrm{for} \quad \psi\in u_k. \label{eq:B-def}
\end{align}
Note that $|u_k|$ is the Lebesgue measure of $u_k$, which would correspond to the total width of the intervals in case when $u_k$ is the union of a finite number of intervals. The dependence of $\overline{U}$ on $\{\Phi_i\}_{i=1}^b$ is in the expectation which is a function of partition $\{u_k\}_{k=1}^M$ created by $\{\Phi_i\}_{i=1}^b$. The weights here could reflect priorities of the users, with $w_j=1/N, \forall j\in[N]$ representing the equal priority case. The objective is to design the scanning beams $\{\Phi_i\}_{i=1}^b$ for a given $b$ to minimize the weighted average of expected width of users' URs defined in \eqref{eq:obj}. In other words, the goal is to solve the following optimization problem:
\begin{align}
\label{eq:optimization}
    \{\Phi_i^*\}_{i=1}^b = \argmin_{\{\Phi_i\}_{i=1}^b} \overline{U}\left(\{\Phi_i\}_{i=1}^b\right).
\end{align}
Since the beams used for data communications cover the corresponding URs, the objective of the optimization is equivalent to minimizing the average of the beamwidth used for data communications with users. We define $\overline{U}^*$ as the optimal objective value in optimization \eqref{eq:optimization}.

\section{Performance Analysis and BA Schemes} 
\label{sec:res}

In this section, we provide upper and lower-bounds and achievability schemes on $\overline{U}^*$ and associated BA schemes.
To this end, we first show that the multi-user BA problem can be translated to the problem of minimizing the $\overline{U}^*$ when there is only one user whose AoD PDF is equal to the weighted average of the PDFs of the users. Then, we characterize the properties of the optimal partition $\{u_k\}_{k=1}^M$ and provide bounds on the performance of the system along with a method for designing the optimal beams. 
\begin{Lemma}
\label{lem:eq}
The multi-user NI-BA problem with the objective function \eqref{eq:obj} is equivalent to a single user NI-BA problem with the objective of minimizing the expected width of the partition $\{u_k\}_{k=1}^M$ and the following prior on the AoD of that user:
\vspace*{-0.1cm}
\begin{align}
\label{eq:equivalence}
     f_{\Psi}(\psi) =  \sum_{j=1}^N w_j f_{\Psi_{j}}(\psi), \quad \psi \in (0,2\pi].
\end{align}
\end{Lemma}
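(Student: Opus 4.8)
The plan is to prove the equivalence by a direct computation, since the two objectives turn out to be \emph{literally the same function} of the scanning beams $\{\Phi_i\}_{i=1}^b$. I would start from the definition \eqref{eq:obj} of the multi-user objective and substitute the expression \eqref{eq:part} for each $\mathbb{E}[|B(\Psi_j)|]$, obtaining a double sum over users $j\in[N]$ and partition cells $k\in[M]$, with an integral of $f_{\Psi_j}$ over $u_k$ weighted by $|u_k|$ inside.

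The one real manipulation is to interchange the (finite) sum over users with the sum over cells and the integral, collecting the common factor $|u_k|$. This yields
\begin{align*}
\overline{U}\left(\{\Phi_i\}_{i=1}^b\right)
&= \sum_{k=1}^{M} |u_k| \int_{\psi\in u_k} \Bigl(\sum_{j=1}^N w_j f_{\Psi_j}(\psi)\Bigr) d\psi \\
&= \sum_{k=1}^{M} |u_k| \int_{\psi\in u_k} f_{\Psi}(\psi)\, d\psi,
\end{align*}
where $f_{\Psi}$ is exactly the weighted average \eqref{eq:equivalence}. This interchange is justified trivially because the sum over $j$ is finite, so no convergence or Fubini-type argument is needed; it is pure linearity of the integral.

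The remaining step is to recognize the right-hand side as the single-user objective. I would first check that $f_{\Psi}$ is a legitimate prior: as a convex combination of the PDFs $f_{\Psi_j}$ with $w_j\geq 0$ and $\sum_{j=1}^N w_j =1$, it is nonnegative and integrates to $1$, and its support is $(0,2\pi]$ since each $f_{\Psi_j}$ has that support. Then the displayed expression is precisely $\mathbb{E}[|B(\Psi)|]$ for a single user with AoD distributed as $f_{\Psi}$, i.e. the expected width of the partition $\{u_k\}_{k=1}^M$ induced by $\{\Phi_i\}_{i=1}^b$. Since the partition, and hence this quantity, depends on the beams in the same way in both problems, the two objectives coincide for every feasible $\{\Phi_i\}_{i=1}^b$; therefore they share the same minimizers and the same optimal value $\overline{U}^*$, which establishes the claimed equivalence.

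I do not anticipate a genuine obstacle here: the result is essentially linearity of expectation, and the only points requiring care are the bookkeeping of swapping the order of summation and confirming that the aggregated prior $f_{\Psi}$ is a valid PDF on the correct support $(0,2\pi]$.
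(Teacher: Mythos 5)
Your proof is correct and matches the paper's own argument: both exchange the finite sums over users and partition cells by linearity of the integral, collect the weighted density $\sum_{j} w_j f_{\Psi_j}$, and identify the result as the single-user expected width under the prior \eqref{eq:equivalence}. Your added check that $f_{\Psi}$ is a valid PDF on $(0,2\pi]$ is a harmless extra detail the paper leaves implicit.
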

\begin{proof}
Given any partition  $\{u_k\}_{k=1}^M$ of the interval $(0,2\pi]$ and weights $\{w_j\}_{j=1}^N$. Following \eqref{eq:part},
\begin{align}
    \sum_{j=1}^N &w_j\mathbb{E}[|B(\Psi_j)|] = \sum_{j=1}^N \left [w_j\sum_{k=1}^M |u_k| \int_{x\in u_k} f_{\Psi_{j}}(x) dx \right]\\
    =&\sum_{k=1}^M |u_k| \int_{z\in u_k} \left[ \sum_{j=1}^N w_j f_{\Psi_{j}}(z)\right] dz\! = \mathbb{E}[|B(\Psi)|],
\end{align}
where the expectation is over the PDF in \eqref{eq:equivalence}.
\end{proof}

Using Lemma \ref{lem:eq}, for the rest of the paper, we only consider the BA problem for one user with the PDF in \eqref{eq:equivalence}.
It is straightforward to show that the optimal scanning beams in the optimization problem \eqref{eq:optimization} are the ones that can generate an optimal partition $\{u_k^*\}_{k=1}^{M}$ such that 
\vspace*{-0.1cm}
\begin{align}
\begin{aligned}
\label{eq:optimization2}
\{u_k^*\}_{k=1}^{M} = \argmin_{u_k, k\in [M]} \quad  &\sum_{k=1}^{M} |u_k|\mathbb{P}(\Psi \in u_k)\\
\textrm{s.t.} \quad & |u_k| \geq 0. 
\end{aligned}
\end{align}

To bound $\overline{U}^*$, we note that the partition $\{u_k\}_{k=1}^M$ created by the beams used for scanning can be viewed as a set of quantization regions and the optimization in \eqref{eq:optimization2} is equivalent to designing a quantizer that minimizes the expected width (i.e. Lebesgue measure) of the partition $\{u_k\}_{k=1}^M$.

\begin{Proposition}
\label{prop:quant}
Consider a scalar quantizer with quantization regions $\{u_k\}_{k=1}^M$ that partition an interval $[a,b]$ and a real valued random variable $X$, defined on the interval $[a,b]$. Then
\vspace*{-0.3cm}
\begin{align}
\label{eq:quant}
   2^{h(X)-\log{M}} \leq \mathbb{E}[|Q(X)|],
\end{align}
where $Q(x) = u_k$ for $x \in u_k$, $|u|$ refers to the Lebesgue measure of the set $u$, and $h(X)$ is the differential entropy of $X$. This bound is tight when $X$ is a uniform random variable.
\end{Proposition}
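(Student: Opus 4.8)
The plan is to recast the statement as an information-theoretic identity relating the differential entropy $h(X)$ to the quantization, and then use a concavity (Jensen) argument to pass from the logarithmic domain back to the expected width. Throughout I would write $p_k = \mathbb{P}(X \in u_k)$ and $\ell_k = |u_k|$ for the width of the $k$-th cell, and introduce the discrete random variable $K$ that records the index of the cell containing $X$, so that by definition $\mathbb{E}[|Q(X)|] = \sum_{k=1}^M p_k \ell_k$. All logarithms are taken base $2$ to match the factor $2^{(\cdot)}$ on the right-hand side.

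First I would establish the entropy decomposition $h(X) = H(K) + h(X \mid K)$, where $H(K) = -\sum_{k} p_k \log p_k$ is the discrete entropy of the cell index and $h(X \mid K) = \sum_{k} p_k\, h(X \mid X \in u_k)$ is the conditional differential entropy. This follows by splitting the integral defining $h(X)$ over the cells and writing $f(x) = p_k \cdot (f(x)/p_k)$ on $u_k$; since $K$ is a deterministic function of $X$, no extra term appears.

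Next I would bound the two contributions separately. The discrete term obeys $H(K) \leq \log M$ because $K$ takes at most $M$ values. For the conditional term, the maximum-entropy property of the uniform law gives $h(X \mid X \in u_k) \leq \log \ell_k$, as the uniform distribution maximizes differential entropy among laws supported on a set of Lebesgue measure $\ell_k$. Combining these yields $h(X) - \log M \leq \sum_{k} p_k \log \ell_k = \mathbb{E}[\log |Q(X)|]$.

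The final and most delicate step is to move from $\mathbb{E}[\log |Q(X)|]$ to $\log \mathbb{E}[|Q(X)|]$: by Jensen's inequality and concavity of $\log$, we have $\mathbb{E}[\log |Q(X)|] \leq \log \mathbb{E}[|Q(X)|]$. Chaining this with the previous display and exponentiating base $2$ gives $2^{h(X) - \log M} \leq \mathbb{E}[|Q(X)|]$, as claimed. For the tightness statement I would verify that when $X$ is uniform on $[a,b]$ and the cells are chosen with equal width $\ell_k = (b-a)/M$, all three inequalities (the bound on $H(K)$, the per-cell maximum-entropy bound, and Jensen) hold with equality, so the bound is attained. The main obstacle I anticipate is not any single inequality, each being standard, but rather stating the entropy decomposition correctly in this mixed discrete/continuous setting and applying the maximum-entropy bound to the \emph{conditional} laws on each cell rather than to the unconditional density.
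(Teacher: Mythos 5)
Your proof is correct and takes essentially the same route as the paper's: the decomposition $h(X) = H(K) + h(X \mid K)$ combined with the per-cell maximum-entropy bound $h(X \mid X \in u_k) \le \log \ell_k$ is an equivalent repackaging of the paper's identity $H(\hat{X}) = h(X) - \mathbb{E}[\log|Q(X)|] + D_{KL}\left(f_X \| \bar{f}_X\right)$, where nonnegativity of the KL term against the piecewise-uniform reference density $\bar{f}_X$ encodes exactly your conditional max-entropy step, and both arguments conclude with $H(\hat{X}) \le \log M$ and Jensen's inequality on $\mathbb{E}[\log|Q(X)|]$. A minor bonus on your side: you explicitly verify the tightness claim (uniform $X$ with equal-width cells makes all three inequalities equalities), which the paper states but does not check.
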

\begin{proof}
\vspace*{-0.1cm}

Define the discrete random variable $ \hat{X}=k$ if $ X\in u_k$. Then $\hat{X}$ has the probability distribution
\begin{align}
    p_i = \mathbb{P} (\hat{X} = k ) = \mathbb{P}\left (X \in u_k\right) \quad \mathrm{for} \quad k\in[M].
\end{align}
We set  
\vspace*{-0.3cm}
\begin{align}
\bar{f}_X(x) = \frac{\mathbb{P} \left (X\in u_{k} \right)}{|Q(x)|}\ \mathrm{for}\quad x\in u_k.
\end{align}
\vspace*{-0.5cm}
\begin{align}
&H(\hat{X}) =\sum_{k=1}^M \!-p_k \log{p_k}  \\
&= \sum_{k=1}^M \int_{u_k} -f_X(x)\log{\left(\bar{f}_X(x)|Q(x)|\right)} dx \\
&=\int_{a}^b -f_X(x)\log{\left(\bar{f}_X(x)|Q(x)|\right)} dx \\
&= \int_{a}^b\!-f_X(x)\log{f_X(x)} dx +\int_{a}^b -f_X(x)\log{|Q(x)|} dx \notag\\
&\quad+ \int_{a}^b\! -f_X(x)\log{\frac{\bar{f}_X(x)}{f_X(x)}} dx \\
&=  h(X)- \mathbb{E}[\log{|Q(X)|}]+ D_{KL}\left(f_X(x)\| \bar{f}_X(x) \right).
\end{align}
Using $H(\hat{X})\leq \log{M}$, $ D_{KL}(\cdot||\cdot) \geq 0 $, and Jensen's inequality on $\mathbb{E}[\log|Q(X)|]$ we get \eqref{eq:quant}. 
\end{proof}
\vspace*{-0.1cm}

While Proposition \ref{prop:quant} is useful in bounding the performance  for a given partition $\{u_k\}_{k=1}^M$, an important question in the design of a BA strategy is how the $b$ scanning beams $\{\Phi_i\}_{i=1}^b$ can be optimized to result in the best partition in \eqref{eq:optimization2}. We investigate this question for two scenarios. We first consider the general case where there are no constraints on the scanning beams $\{\Phi_i\}_{i=1}^b$ and next, for practical considerations, we study the scanning beams that are constrained to be contiguous, i.e. the case where the ACR of the scanning beams are contiguous angular intervals.
\vspace*{-0.1cm}
\subsection{No Constraints on the Scanning Beams}
\label{subsec:noncontig}
Theorem \ref{th:sc1} bounds $\overline{U}^*$  and provides a method for designing the scanning beams $\{\Phi_i\}_{i=1}^b$ when there are no constraints on the beams.
\vspace*{-0.1cm}
\begin{Theorem}
\label{th:sc1}
The optimal value of the objective function in optimization problem \eqref{eq:optimization} $\overline{U}^*$ is bounded as 
\begin{align}
\label{eq:ncon}
\frac{2^{h(\Psi)}}{2^{b}} \leq \overline{U}^* \leq \frac{2\pi}{2^b},
\end{align}
where $h(\Psi)$ is the differential entropy of random variable $\Psi$ with PDF in \eqref{eq:equivalence}.
\end{Theorem}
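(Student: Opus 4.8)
The plan is to establish the two inequalities separately, both resting on the reduction to a single user (Lemma~\ref{lem:eq}) and on the quantization bound (Proposition~\ref{prop:quant}). After the reduction, $\overline{U}^*$ is the minimum of $\sum_{k=1}^M |u_k|\,\mathbb{P}(\Psi\in u_k)$ over all partitions $\{u_k\}_{k=1}^M$ of $(0,2\pi]$ that can be realized by some choice of $b$ scanning beams $\{\Phi_i\}_{i=1}^b$. The first fact I would record is the counting bound $M\le 2^b$: each beam contributes one ACK/NACK bit to the feedback, so the joint feedback vector takes at most $2^b$ distinct values and hence, through \eqref{eq:phis}, induces at most $2^b$ cells.

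For the lower bound, I would apply Proposition~\ref{prop:quant} on the interval $(0,2\pi]$ (of length $2\pi$) with $X=\Psi$ distributed according to \eqref{eq:equivalence}. For any realizable partition this yields $\sum_{k=1}^M |u_k|\,\mathbb{P}(\Psi\in u_k)=\mathbb{E}[|Q(\Psi)|]\ge 2^{h(\Psi)-\log M}=2^{h(\Psi)}/M$. Since $M\le 2^b$ uniformly over all realizable partitions, every feasible objective value is at least $2^{h(\Psi)}/2^b$, and therefore so is the minimum, giving $\overline{U}^*\ge 2^{h(\Psi)}/2^b$.

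For the upper bound, it suffices to exhibit a single feasible partition whose objective does not exceed $2\pi/2^b$. I would take the uniform partition of $(0,2\pi]$ into $2^b$ contiguous cells of equal width $2\pi/2^b$, index them by length-$b$ binary strings, and realize this partition by letting the ACR $\Phi_i$ of the $i$-th beam be the union of those cells whose $i$-th index bit equals one. The resulting ACK/NACK pattern of the $b$ beams then identifies the cell containing $\Psi$ exactly, so the partition induced through \eqref{eq:phis} is precisely these $2^b$ equal cells. Substituting $|u_k|=2\pi/2^b$ into \eqref{eq:part} and using $\sum_{k} \mathbb{P}(\Psi\in u_k)=1$ gives objective value exactly $2\pi/2^b$, hence $\overline{U}^*\le 2\pi/2^b$.

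The lower bound is essentially immediate once Proposition~\ref{prop:quant} and $M\le 2^b$ are in hand, so I do not expect difficulty there. The step needing the most care is the realizability claim in the upper bound: I must verify that $b$ beams, here with non-contiguous ACRs, can indeed produce the uniform binary partition. This is exactly where the ``no constraints on the scanning beams'' hypothesis is used---permitting each $\Phi_i$ to be a union of disjoint intervals is what makes the binary-encoding construction valid, whereas the contiguous-beam setting treated next lacks this freedom and will require a different argument.
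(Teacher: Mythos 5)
Your proposal is correct and follows essentially the same route as the paper: the lower bound via Proposition~\ref{prop:quant} combined with the counting bound $M\le 2^b$, and the upper bound via an explicit set of $b$ unconstrained beams realizing the uniform partition of $(0,2\pi]$ into $2^b$ equal cells. Your binary-indicator beams (beam $i$ covers the cells whose $i$-th index bit is one) differ only cosmetically from the shifted-interval family in \eqref{eq:beam}; both induce the same equal-width partition and give $\overline{U}=2\pi/2^b$ for any prior.
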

\begin{proof}

\textit{Lower-bound}: A set of BA scanning beams $\{\Phi_i\}_{i=1}^b$ results in a partition $\{u_k\}_{k=1}^M$ of the AoD with $M \leq 2^b$. The BA procedure can be thought of as quantization of the angle $\Psi$ whose distribution is given in \eqref{eq:equivalence}. Using Lemma \ref{lem:eq}, we have the lower-bound in the theorem. 
\begin{figure}[t]
\centering
\includegraphics[width=\linewidth, draft=false]{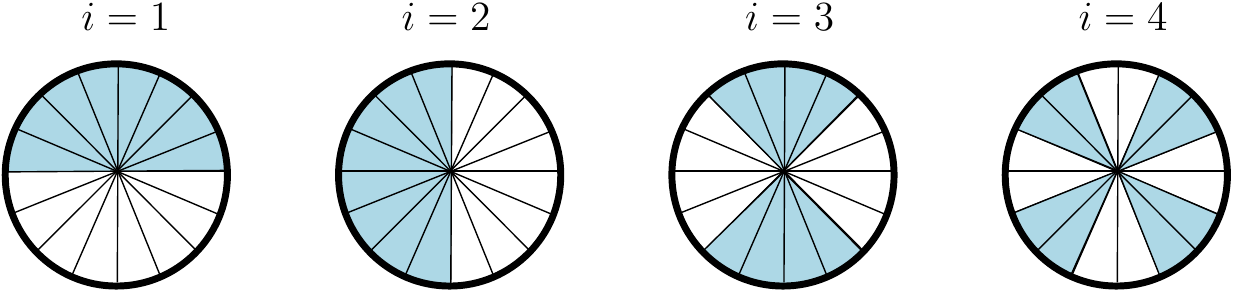}
\caption{An example of the proposed BA strategy for $b = 4$ in the proof of Theorem \ref{th:sc1}.}
\label{fig:exp1}
\vspace*{-0.6cm}
\end{figure}

\textit{Upper-bound}: Consider the following beams
\begin{gather}
\label{eq:beam}
\Phi_1 = \left (0, \pi \right],\quad \Phi_2 = \left(\frac{\pi}{2}, \frac{3\pi}{2} \right] \\ 
\Phi_i = \left\{\left(\frac{\pi}{2^{i-1}}, \frac{3\pi}{2^{i-1}}\right] + \frac{j\pi}{2^{i-3}} \bigg |~j=\left\{0,1,\cdots, 2^{i-2}-1\right\} \right\}, \notag
\end{gather}
for $i\in[b]$. These collection of beams, illustrated in Fig. \ref{fig:exp1} for $b=4$, result in $\overline{U} = \frac{2\pi}{2^{b}}$. 
\end{proof}
\vspace*{-0.2cm}
When the priors on the users' AoDs are not available, one may assume uniform distribution in which case the upper and lower-bounds in Theorem \ref{th:sc1} meet at $\frac{2\pi}{2^b}$, which is achievable by the beams given in \eqref{eq:beam}.

In the following, we first show an important property of the optimal partition $\{u_k^*\}_{k=1}^M$ in \eqref{eq:optimization2}, then using this property, we design the optimal partition and provide a possible set of scanning beams based on the optimal partition.

For a given partition $\{u_k\}_{k=1}^M$, let $a_k = |u_k|$ be the width of the $k^{th}$ region of the partition, hence $ \sum_{k=1}^M a_k =  2\pi$. Without loss of generality, we assume $a_1\leq a_2\leq \cdots \leq a_M$. Consider differential arcs of width $d\theta$ and probability $f_{\Psi}(\theta) d\theta$, where $f_{\Psi}(\cdot)$ is given by \eqref{eq:equivalence}. Clearly, for any two arcs, say at $\theta_i$ and $\theta_j$ with probabilities $f_{\Psi}(\theta_i) d\theta_i \geq f_{\Psi}(\theta_j) d\theta_j$ the one with larger probability (in this case $\theta_i$) should belong to the region of the partition with smaller width in order to minimize the objective value in \eqref{eq:optimization2}, otherwise the association can be reversed to achieve smaller value in \eqref{eq:optimization2}. Therefore, we have the following proposition.

\vspace*{-0.1cm}
\begin{Proposition}
\label{prop:creg_ncon} For the optimal partition $\{u_k^*\}_{k=1}^M$, any two differential arcs at $\theta_i$ and $\theta_j$ in partition regions $u^*_i$ and $u^*_j$, respectively, satisfy $f_{\Psi}(\theta_i) \geq f_{\Psi}(\theta_j)$ if $|u_i| < |u_j|$.
\end{Proposition}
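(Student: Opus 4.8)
The plan is to argue by a local exchange (rearrangement) argument: assuming the stated monotonicity fails at an optimal partition, I would build a measure-preserving modification that strictly lowers the objective in \eqref{eq:optimization2}, contradicting optimality. First I would fix an optimal partition $\{u_k^*\}_{k=1}^M$, write $a_k = |u_k^*|$ for the widths and $P_k = \mathbb{P}(\Psi\in u_k^*)$ for the region masses, so that the objective reads $\sum_{k} a_k P_k$. Suppose toward a contradiction that two arcs at $\theta_i\in u_i^*$ and $\theta_j\in u_j^*$ violate the claim, i.e. $a_i < a_j$ but $f_\Psi(\theta_i) < f_\Psi(\theta_j)$.

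The key step is to swap small neighborhoods of the two arcs between their regions. Concretely, I would choose subsets $A\subseteq u_i^*$ and $B\subseteq u_j^*$ of equal Lebesgue measure $|A| = |B| = \epsilon$ around $\theta_i$ and $\theta_j$, and replace the regions by $\tilde u_i = (u_i^*\setminus A)\cup B$ and $\tilde u_j = (u_j^*\setminus B)\cup A$. Since $|A| = |B|$, the widths $a_i$ and $a_j$ are unchanged and every other term of the objective is untouched, so the only change comes from the reassigned mass:
\[
\Delta = (a_j - a_i)\big(\mathbb{P}(\Psi\in A) - \mathbb{P}(\Psi\in B)\big).
\]
Because $a_j - a_i > 0$ and the small sets can be taken so that $\mathbb{P}(\Psi\in A) < \mathbb{P}(\Psi\in B)$ (the densities being strictly ordered at $\theta_i,\theta_j$), we obtain $\Delta < 0$, contradicting optimality. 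This is exactly the paper's differential-arc statement: swapping two arcs of common width $d\theta$ changes the objective by $(a_j - a_i)(p_i - p_j)$, which is negative precisely when the higher-density arc lies in the wider region.

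I expect the main obstacle to be the rigor of the infinitesimal exchange rather than the one-line algebra. Two points need care: that the modified sets $\tilde u_i,\tilde u_j$ still form a valid partition --- immediate in this unconstrained setting, since no contiguity is imposed on the $u_k$ --- and that equal-measure sets $A,B$ with the strict mass inequality $\mathbb{P}(\Psi\in A) < \mathbb{P}(\Psi\in B)$ genuinely exist. The latter follows from the strict pointwise ordering $f_\Psi(\theta_i) < f_\Psi(\theta_j)$ by shrinking $A,B$ to neighborhoods of $\theta_i,\theta_j$ and invoking continuity of $f_\Psi$ (or the Lebesgue differentiation theorem at density points). With existence secured, the measure-preserving swap together with the sign of $\Delta$ delivers the contradiction, establishing the proposition.
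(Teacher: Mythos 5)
Your proposal is correct and matches the paper's own argument: the paper proves this proposition via exactly the same exchange argument, informally swapping two differential arcs of width $d\theta$ between regions and observing that the objective in \eqref{eq:optimization2} decreases when the higher-density arc sits in the narrower region. Your version is a rigorized form of that reasoning (finite equal-measure sets $A,B$ with $\Delta = (a_j - a_i)\left(\mathbb{P}(\Psi\in A) - \mathbb{P}(\Psi\in B)\right)$, justified via continuity or Lebesgue points), which is a welcome tightening but not a different route.
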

\vspace*{-0.1cm}

Let us consider the case that $f_{\Psi}(\cdot)$ is a monotonic function over the interval $(0,2\pi]$. Using the Proposition \ref{prop:creg_ncon}, we note that the optimal partition $\{u_k^*\}_{k=1}^{M}$ consists of $M$ contiguous intervals $u^*_i = (x^*_i, x^*_{i+1}]$ that are defined only based on the boundary points $x^*_i, i \in [M]$ with $x^*_{M+1} = x^*_1$. This simplifies the search for of the optimal partition $\{u_k^*\}_{k=1}^{M}$ to finding the boundary values $x^*_i, i \in [M]$ which are the solutions to the following optimization
\begin{align}
\label{eq:OpCon}
\begin{aligned}
\argmin_{x_i, i\in [M]} \quad  &\sum_{i=1}^{M} (x_{i+1} - x_i)\int_{x_i}^{x_{i+1}} f_{\Psi}(x)dx. \\
\end{aligned}
\end{align}


Building on the design idea for monotonic PDF for AoD, we note that there is always a one-to-one mapping $g: (0,2\pi] \rightarrow (0,2\pi]$ where PDF of $g(\Psi)$ is monotonic over the interval $(0,2\pi]$ for any PDF of AoD $\Psi$. Hence, by designing the optimal partition for $g(\Psi)$, one can find the corresponding optimal partition for $\Psi$.

Finally, we need to define the optimal scanning beams $\{\Phi_i\}_{i=1}^b$ in terms of optimal partition $\{u_k^*\}_{k=1}^{M}$. Combining, we have
\vspace*{-0.1cm}
\begin{Theorem}
\label{th:OpBeam}
The optimal scanning beams in \eqref{eq:optimization} are given by 
\begin{align}
\label{eq:nconb}
&\Phi_i^* = \cup_{j \in S_i}  u^*_j, i\in [b],
\end{align}
where $u_i^*=g^{-1}((x_i^*, x_{i+1}^*])$, with $x_i, i\in[2^b]$ being solutions of \eqref{eq:OpCon}, $g(\cdot)$ is a mapping such that $f_{g(\Psi)}(\cdot)$ is monotone over the interval $(0, 2\pi]$ , and $S_i$ is the set of all values in $\{1, \ldots, 2^b\}$ for which the $i^{th}$ bit in their binary representation is zero.
\end{Theorem}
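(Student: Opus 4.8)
The plan is to assemble the theorem from the structural facts already in hand and add a single combinatorial step that converts an optimal partition into $b$ scanning beams. By Lemma~\ref{lem:eq} the multi-user problem collapses to the single-user problem with prior \eqref{eq:equivalence}, and by \eqref{eq:optimization2} optimizing the beams is the same as optimizing the induced partition. The preceding discussion (Proposition~\ref{prop:creg_ncon}) shows that for a \emph{monotone} PDF the optimal cells are contiguous intervals whose endpoints solve \eqref{eq:OpCon}. To reach a general PDF, I would observe that the objective $\sum_k |u_k|\,\mathbb{P}(\Psi\in u_k)$ is invariant under relabeling the angle by a measure-preserving bijection: if $g$ is such a map with $|g^{-1}(I)|=|I|$, then $|u_k^*|$ and $\mathbb{P}(\Psi\in u_k^*)$ for $u_k^*=g^{-1}(I_k)$ match $|I_k|$ and $\mathbb{P}(g(\Psi)\in I_k)$. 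Hence, taking $g$ to be the monotone (decreasing) rearrangement making $f_{g(\Psi)}$ monotone, I solve \eqref{eq:OpCon} in the $g$-domain and pull the cells back through $g^{-1}$ to obtain the optimal partition $u_j^*=g^{-1}((x_j^*,x_{j+1}^*])$ for the original $\Psi$. This disposes of everything except the passage from partition to beams.

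Next I would record the elementary counting bound: $b$ beams cannot produce more than $2^b$ cells. Each beam contributes one ACK/NACK bit and, by \eqref{eq:phis}, a cell $B(\Psi)$ of the induced partition is determined by the resulting $b$-bit feedback vector; there are at most $2^b$ such vectors, so any partition attainable by $b$ beams has $M\le 2^b$ cells. Consequently the $2^b$-cell optimizer produced above is a valid lower envelope for \eqref{eq:optimization}, and it suffices to show that the proposed beams realize it exactly.

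The core of the proof is the realization step: the $b$ beams $\Phi_i^*=\cup_{j\in S_i}u_j^*$ in \eqref{eq:nconb} induce precisely $\{u_j^*\}_{j=1}^{2^b}$. Fixing $\psi\in u_j^*$, I would first note that since the cells partition $(0,2\pi]$ and $\Phi_i^*$ is a union of cells, $\psi\in\Phi_i^*$ iff $j\in S_i$, i.e. iff the $i$th bit of $j$ is zero; thus $\Theta_i(\psi)=\Phi_i^*$ when $j\in S_i$ and $\Theta_i(\psi)=(0,2\pi]-\Phi_i^*$ otherwise, giving $B(\psi)=\bigcap_{i:\,j\in S_i}\Phi_i^*\cap\bigcap_{i:\,j\notin S_i}\big((0,2\pi]-\Phi_i^*\big)$. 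I would then establish $B(\psi)=u_j^*$: the inclusion $u_j^*\subseteq B(\psi)$ is immediate from the definition of $\Phi_i^*$, while for the reverse inclusion any cell $u_{j'}^*\subseteq B(\psi)$ must satisfy $j'\in S_i\Leftrightarrow j\in S_i$ for every $i$, so $j$ and $j'$ share all $b$ bits and hence $j'=j$. Therefore $B(\psi)=u_j^*$ for every $\psi\in u_j^*$, so the map $\psi\mapsto B(\psi)$ reproduces exactly the optimal partition (empty $u_j^*$ simply never occur as feedback, which is harmless).

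Combining the three steps yields the claim: the beams of \eqref{eq:nconb} generate the optimal $2^b$-cell partition, which by the counting bound is the best any $b$ beams can do, so $\{\Phi_i^*\}$ solve \eqref{eq:optimization}. I expect the main obstacle to be the realization step, specifically the set-theoretic identity $B(\psi)=u_j^*$ and the injectivity of the binary labeling guaranteeing that the $2^b$ cells are separated by distinct feedback vectors; by contrast, the reductions via Lemma~\ref{lem:eq} and the measure-preserving map $g$ are bookkeeping, and the bound $M\le 2^b$ is a one-line counting argument.
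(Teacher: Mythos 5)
Your proposal is correct and follows essentially the same route as the paper: the reduction via Lemma~\ref{lem:eq}, the exchange argument of Proposition~\ref{prop:creg_ncon} giving contiguous optimal cells for monotone densities via \eqref{eq:OpCon}, the rearrangement map $g$ for general priors, and the binary-labeled union $\Phi_i^*=\cup_{j\in S_i}u_j^*$. In fact you supply two details the paper leaves implicit --- that $g$ must be measure-preserving (so that cell widths, not just probabilities, are invariant under the pullback) and the explicit verification that distinct feedback vectors separate the $2^b$ cells so that $B(\psi)=u_j^*$ --- both of which strengthen rather than depart from the paper's argument.
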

\vspace*{-0.1cm}
The optimization in \eqref{eq:OpCon} can be solved using standard linear programming software such as CVX \cite{cvx} when the objective function is convex. While, in general, this optimization can be non-convex for arbitrary PDFs, for small values of $b$, one can find a near optimal solution by an exhaustive search over a fine grid on $(0,2\pi]$. Trivially, for large values of $b$, the upper and lower-bounds in Theorem \ref{th:sc1} get closer and the explicit construction for the upper-bound in Theorem \ref{th:sc1} performs close to the optimal.

In practice, it is not often feasible to realize the beams with non-contiguous ACR and sharp roll-offs. Therefore, 
while our approach provides the characteristics of the optimal scanning beams described in Theorem \ref{th:OpBeam}, it may not necessarily lead to a practical NI-BA. To address this issue, we consider the problem of NI-BA optimization with contiguous beams next. 

\vspace*{-0.2cm}
\subsection{Scanning with Contiguous Beams}
In this section, we constrain the ACR of the scanning beams to be contiguous intervals. When using contiguous scanning beams, the resulting URs may be non-contiguous. 
In the following, we show that for the optimal contiguous scanning beams $\{\Phi_i\}_{i=1}^b$ the resulting URs $\{u_k\}_{k=1}^M$ are also contiguous. The importance of this result is that it not only addresses the practicality of the beams  used in the BA phase, but also in the data communication phase where the BS uses  a beam covering user's UR to transmit to that user.

\begin{Proposition}
\label{prop:creg_con}
For $b$ contiguous scanning beams $\{\Phi_i\}_{i=1}^b$ there are at most $M = 2b$ URs in the resulting partition $\{u_k\}_{k=1}^M$. Furthermore, without loss of generality, the URs in the optimal partition minimizing \eqref{eq:optimization2} are contiguous.
\end{Proposition}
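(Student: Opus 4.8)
The plan is to prove the two assertions separately: the count $M \le 2b$ by a boundary-counting argument, and the contiguity of the optimal partition by a refinement (splitting) argument supported by an explicit beam construction. For the count, I would observe that each contiguous beam $\Phi_i=(\alpha_i,\beta_i]$ contributes exactly two boundary points $\alpha_i,\beta_i$ on the circle $(0,2\pi]$, so the $b$ beams produce at most $2b$ distinct boundary points. These points cut $(0,2\pi]$ into at most $2b$ elementary arcs, and on the interior of any such arc the predicate $\psi\in\Phi_i$ is constant for every $i\in[b]$. Hence, by the definition $B(\psi)=\cap_{i=1}^b\Theta_i(\psi)$ in \eqref{eq:phis}, the UR $B(\psi)$ is constant on each elementary arc, so every elementary arc lies entirely inside a single UR and $M\le 2b$.

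For the contiguity claim, the key is a monotonicity property of the objective in \eqref{eq:optimization2}: splitting any UR weakly decreases the objective. Concretely, if a region of width $a$ and probability $p$ is split into two pieces of widths $a',a''$ and probabilities $p',p''$ with $a'+a''=a$ and $p'+p''=p$, then the new contribution $a'p'+a''p''$ is at most $(a'+a'')(p'+p'')=ap$, because the dropped cross terms $a'p''+a''p'$ are nonnegative. Thus a finer partition never increases the objective, and the best one can hope to do with $b$ contiguous beams is to refine the partition as much as possible, namely to make each of the (at most $2b$) elementary arcs its own UR, which are by construction contiguous intervals.

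The remaining and, I expect, hardest step is to show that this finest all-singleton partition is in fact realizable by $b$ contiguous beams, since it requires the $2b$ elementary arcs to carry pairwise distinct ACK/NACK patterns while each beam is constrained to be a single arc. I would settle this by an explicit construction: labeling the cut points in cyclic order as $x_1,\ldots,x_{2b}$, set $\Phi_i=(x_i,x_{i+b}]$ for $i\in[b]$, a family of $b$ contiguous arcs. Beam $i$ then covers precisely the elementary arcs lying between $x_i$ and $x_{i+b}$, and a direct verification shows that the $2b$ induced containment patterns are pairwise distinct, so each elementary arc forms its own contiguous UR. Combining the three steps, for any choice of contiguous beams there is a contiguous-beam configuration whose induced partition is at least as good and consists solely of contiguous URs; consequently the optimal partition minimizing \eqref{eq:optimization2} may be taken to be contiguous, with $M=2b$.
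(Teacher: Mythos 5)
Your proof is correct and follows essentially the same route as the paper's: counting the at most $2b$ beam endpoints, constructing the beams $\Phi_i'=(c_i,c_{i+b}]$ from the sorted endpoints, and noting that the resulting refined partition of contiguous arcs can only improve the objective. You additionally spell out two steps the paper merely asserts --- the cross-term inequality $a'p'+a''p''\leq ap$ showing refinement never hurts, and the verification that the $2b$ containment patterns of the constructed beams are pairwise distinct --- which strengthens rather than changes the argument.
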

\begin{proof}
A contiguous scanning beam results in an interval, which is described by its two end points. The collection of $b$ contiguous beams are thus represented by at most $2b$ such end points, resulting in at most $M=2b$ URs. Given contiguous scanning beams $\{\Phi_i\}_{i=1}^b$ with partition $\{u_k\}_{k=1}^M, M < 2b$. we can create the set $\{c_i\}_{i=1}^{2b}$ including all the end points of the scanning beams in an ascending manner i.e., $c_i \leq c_j$ for $i\leq j$. Define the beams $\{\Phi_i'\}_{i=1}^b$ as $\Phi_i' = (c_i, c_{i+b}]$. These beams form the partition $\{u_k'\}_{k=1}^{M'}, M' \geq M$. It is easy to verify that these URs are contiguous and the partition $\{u_k\}_{k=1}^M$ is formed by grouping $u_k'$s together. Therefore, $\{u_k'\}_{k=1}^{M'}$ has less expected width then $\{u_k\}_{k=1}^{M}$.
\end{proof}

Theorem \ref{th:sc1c} below bounds the optimal performance when contiguous scanning beams are used.

\begin{Theorem}
\label{th:sc1c}
The optimal value of the objective function $\overline{U}^*$ in optimization problem \eqref{eq:optimization} when contiguous scanning beams are used is bounded as 
\begin{align}
\label{eq:con}
\frac{2^{h(\Psi)}}{2b} \leq \overline{U}^* \leq \frac{\pi}{b}. 
\end{align}
\end{Theorem}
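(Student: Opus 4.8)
The plan is to handle the two inequalities in \eqref{eq:con} independently: the lower bound follows immediately from the general quantization result already established, while the upper bound requires exhibiting an explicit family of contiguous beams that attains $\pi/b$ for every prior.

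For the lower bound, I would combine Proposition \ref{prop:creg_con} with Proposition \ref{prop:quant}. By Proposition \ref{prop:creg_con}, any collection of $b$ contiguous scanning beams induces a partition $\{u_k\}_{k=1}^M$ with $M \le 2b$. Applying Proposition \ref{prop:quant} to the quantizer defined by this partition and the random variable $\Psi$ with prior \eqref{eq:equivalence} gives $\mathbb{E}[|Q(\Psi)|] \ge 2^{h(\Psi) - \log M}$. Since $2^{h(\Psi) - \log M}$ is non-increasing in $M$ and $M \le 2b$, it is at least $2^{h(\Psi) - \log(2b)} = 2^{h(\Psi)}/(2b)$. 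As this holds for every admissible contiguous configuration, minimizing over all such configurations yields $\overline{U}^* \ge 2^{h(\Psi)}/(2b)$.

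For the upper bound, I would display $b$ contiguous beams achieving $\overline{U} = \pi/b$ regardless of the prior, in the spirit of the construction \eqref{eq:beam} used for Theorem \ref{th:sc1}. The natural candidate is $b$ semicircular beams, each of angular width $\pi$ and successively rotated by $\pi/b$, namely $\Phi_i = (\tfrac{(i-1)\pi}{b}, \tfrac{(i-1)\pi}{b} + \pi]$ for $i \in [b]$. Their $2b$ endpoints are the equally spaced points $\tfrac{(k-1)\pi}{b}$, $k \in [2b]$, which cut $(0,2\pi]$ into $2b$ arcs, each of width exactly $\pi/b$. Provided each such arc is a distinct UR, every $|u_k| = \pi/b$ and hence $\overline{U} = \sum_k |u_k|\,\mathbb{P}(\Psi \in u_k) = \tfrac{\pi}{b}\sum_k \mathbb{P}(\Psi \in u_k) = \pi/b$, which is independent of the prior and gives $\overline{U}^* \le \pi/b$.

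The main obstacle is exactly the claim that the $2b$ arcs are \emph{genuinely distinct} URs, i.e. that no two of them share the same ACK/NACK signature and so merge into a wider (possibly non-contiguous) region that would inflate the objective past $\pi/b$. I would settle this by computing the membership set $P_m = \{ i \in [b] : A_m \subseteq \Phi_i \}$ of the $m$-th arc $A_m = (\tfrac{(m-1)\pi}{b}, \tfrac{m\pi}{b}]$. A short calculation, using that no semicircular beam wraps past $2\pi$, shows $A_m \subseteq \Phi_i$ iff $\max(1,m-b+1)\le i \le \min(b,m)$, so that $P_m = \{1,\dots,m\}$ for $m \le b$ and $P_m = \{m-b+1,\dots,b\}$ for $b < m \le 2b$. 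These $2b$ sets are pairwise distinct, since those in the first family all contain the index $1$ and have strictly increasing cardinality, while those in the second family all omit $1$ and have strictly decreasing cardinality. Hence the $2b$ arcs are separate URs, the construction attains $\pi/b$, and the sandwich in \eqref{eq:con} is complete.
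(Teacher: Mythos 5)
Your proposal is correct and follows essentially the same route as the paper: the lower bound combines Proposition \ref{prop:creg_con} ($M \le 2b$) with the quantization bound of Proposition \ref{prop:quant}, and the upper bound uses exactly the paper's construction of $b$ semicircular beams rotated by $\pi/b$. Your explicit verification via the membership sets $P_m$ that all $2b$ arcs carry distinct ACK/NACK signatures (including the all-NACK arc $A_{2b}$) is a correct filling-in of a detail the paper's proof asserts without argument.
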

\begin{proof}
The lower-bound is derived similar to that of Theorem \ref{th:sc1} using the maximum number of regions as $2b$ following Prop. \ref{prop:creg_con}.
\begin{figure}[t]
\centering
\includegraphics[ width=\linewidth, draft=false]{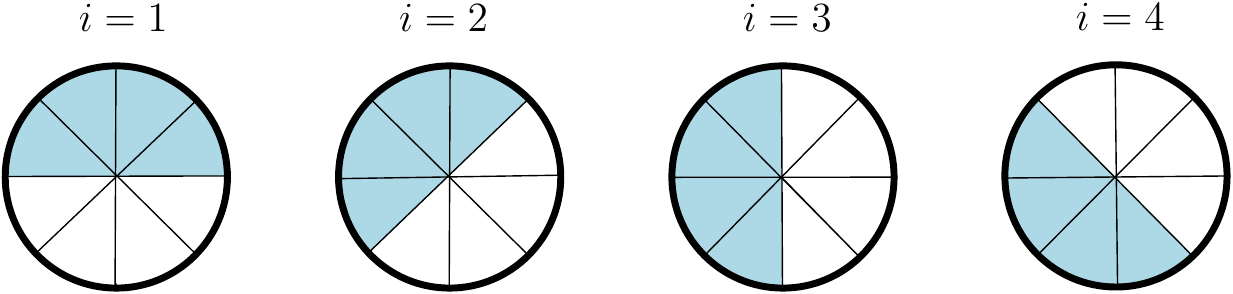}
\caption{An example of the proposed BA scheme when scanning beams are contiguous for $b = 4$ in the proof of Theorem \ref{th:sc1c}.}
\label{fig:exp2}
\vspace*{-0.6cm}
\end{figure}
For the upper-bound, consider the following beams 
\begin{align}
\Phi_i = \left(\frac{(i-1)\pi}{b}, \pi + \frac{(i-1)\pi}{b}\right], i\in [b].
\end{align}
These beams, illustrated in Fig. \ref{fig:exp2} for $b=4$, result in expected uncertainty of $\frac{\pi}{b}$.
 \end{proof}
 
Using Proposition \ref{prop:creg_con} and following similar approach as in Theorem \ref{th:OpBeam}, the following theorem provides the optimal set of scanning beams for optimization problem in \eqref{eq:optimization}. Note that for the uniform distribution on the AoD, the optimal set of contiguous beams corresponds to the ones used in the upper-bound of Theorem \ref{th:sc1c} and illustrated in Fig. \ref{fig:exp2} for $b=4$.

\begin{Theorem}
\label{th:OpBeamc}
The optimal contiguous scanning beams in \eqref{eq:optimization} are given by 
\vspace{-0.2cm}
\begin{align}
\label{eq:conb}
\Phi_i = (x^*_i, x^*_{i+b}], i\in [b],
\end{align}
where, $x_i^*, i\in[2b]$ are the solution to the optimization \eqref{eq:OpCon} with $M = 2b$ regions.
\end{Theorem}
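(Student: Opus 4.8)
The plan is to use Proposition \ref{prop:creg_con} to reduce the beam-design problem to choosing an optimal contiguous partition with at most $2b$ regions, to solve that partition problem via \eqref{eq:OpCon} with $M=2b$, and then to exhibit $b$ contiguous beams that realize the resulting partition. First I would invoke Proposition \ref{prop:creg_con}: among all sets of $b$ contiguous scanning beams, the optimal one induces a partition whose URs are contiguous and number at most $2b$. Since replacing a region of a contiguous partition by two sub-intervals never increases the objective $\sum_k |u_k|\,\mathbb{P}(\Psi\in u_k)$ of \eqref{eq:optimization2} (the width of each piece is no larger while the probabilities add up), the best achievable value uses the full budget of exactly $2b$ contiguous regions. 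Such a partition is described by ordered boundary points $x_1^*\le x_2^*\le\cdots\le x_{2b}^*$ on $(0,2\pi]$, with $x_{2b+1}^*=x_1^*$, and the optimal boundaries are precisely the minimizers of \eqref{eq:OpCon} with $M=2b$. This shows the optimal value of \eqref{eq:OpCon} is a lower bound on $\overline{U}^*$ over all contiguous beam configurations.

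It then remains to prove this bound is attained by a legitimate set of $b$ contiguous beams, which is the crux of the argument. I would take the beams of \eqref{eq:conb}, namely $\Phi_i=(x_i^*,x_{i+b}^*]$ for $i\in[b]$; each is a single arc and hence contiguous, and the left endpoints $x_1^*,\dots,x_b^*$ together with the right endpoints $x_{b+1}^*,\dots,x_{2b}^*$ exhaust all $2b$ boundary points, each used exactly once. Writing $A_j=(x_j^*,x_{j+1}^*]$ for the $2b$ elementary arcs, the key observation is that beam $\Phi_i$ covers exactly the block of $b$ consecutive arcs $A_i,\dots,A_{i+b-1}$, since its two endpoints are boundary points of the partition. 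Traversing the circle, crossing each boundary point toggles the coverage of precisely one beam (the unique beam having that point as an endpoint), so consecutive arcs have ACK/NACK signatures that differ in a single coordinate. A short bookkeeping confirms the $2b$ signatures are all distinct: for $j\le b$ the covering beams form the growing prefix $\{\Phi_1,\dots,\Phi_j\}$, and for $j>b$ they form the shrinking suffix $\{\Phi_{j-b+1},\dots,\Phi_b\}$ (with $A_{2b}$, the wrap-around arc, covered by no beam). Hence no two arcs merge and the beams of \eqref{eq:conb} generate exactly the target $2b$-region partition.

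Combining the two directions yields the theorem: the partition induced by the beams in \eqref{eq:conb} attains the lower bound furnished by \eqref{eq:OpCon} with $M=2b$, so these beams minimize $\overline{U}$ among all $b$ contiguous scanning beams. I expect the principal obstacle to be the achievability step --- specifically the circular coverage bookkeeping and the verification that the $2b$ elementary arcs carry distinct ACK/NACK signatures, so that none of them coalesce into a single UR; the wrap-around arc $A_{2b}=(x_{2b}^*,x_1^*]$ needs separate care. In contrast to Theorem \ref{th:OpBeam}, no monotonizing map $g$ is required here, because Proposition \ref{prop:creg_con} already guarantees that the optimal URs are contiguous for an arbitrary AoD prior.
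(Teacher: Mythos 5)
Your proposal is correct and follows essentially the same route as the paper, which proves the theorem by combining Proposition \ref{prop:creg_con} (at most $2b$ URs, optimal partition contiguous) with the solution of \eqref{eq:OpCon} for $M=2b$, realized by exactly the beam construction $\Phi_i=(x_i^*,x_{i+b}^*]$ already used in the proof of Proposition \ref{prop:creg_con}. Your prefix/suffix ACK-NACK signature bookkeeping simply makes explicit the step the paper dismisses as ``easy to verify,'' and it checks out, including the wrap-around arc covered by no beam.
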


We conclude this section by comparing the performance of our approach with an adaptation of ES to our model. The scanning beams in ES are defined by $\Phi_i = ((i-1)\frac{2\pi}{b+1}, i\frac{2\pi}{b+1}], i\in[b]$. To have a fair comparison, we assume that we use the optimal contiguous scanning beams from Theorem \ref{th:OpBeamc} since ES beams are also contiguous. Considering a uniform PDF on the AoD, it can be seen that the performance of ES is $\overline{U}_{ES}=\frac{2\pi}{b+1}$ while the performance of our approach is $\overline{U}^*=\frac{\pi}{b}$ resulting in a factor of two gain for sufficiently large values of $b$. Furthermore, the gain can be larger than $2$ since unlike
ES, the proposed method exploits the distribution of AoD.
For instance, Fig. \ref{fig:exp3} shows that the gains obtained by using optimized contiguous beams over ES are $1.48\times,1.87\times,2.17\times,2.22\times,2.31\times$ for $b=2,3,4,5,6$, respectively. This figure considers a two-user scenario where both users have piecewise uniform distributions and $w_1=w_2=1/2$. AoD of the first and second user are uniformly distributed over the first and third quadrant, respectively, with probability $0.9$ and uniformly distributed over the remaining quadrants with probability $0.1$. The figure also indicates that the difference between upper-bound and lower-bound presented in Theorem \ref{th:sc1c} is small for relatively large values of $b$. Therefore, designing the beams using uniform prior does not degrade the performance much for large values of $b$. Fig. \ref{fig:exp3b} illustrates the optimal contiguous scanning beams along with the corresponding URs for $b=4$. We observe that the optimal strategy creates narrower URs in the first and third quadrants where the PDFs $f_{\Psi_1}$ and $f_{\Psi_2}$ are more concentrated.
\begin{figure}[t]
\centering
\includegraphics[ width=0.8\linewidth, draft=false]{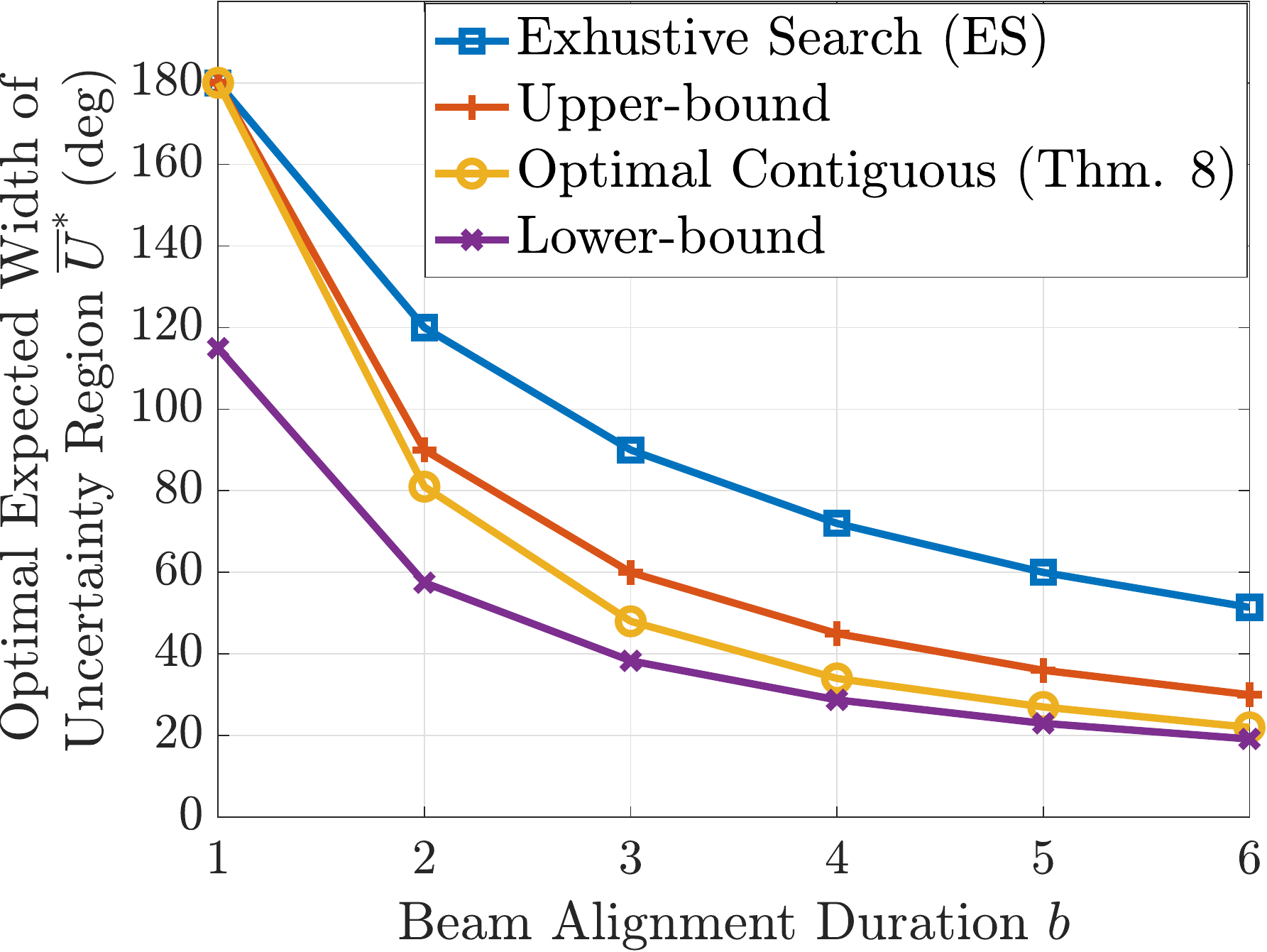}
\caption{Performance of the optimal contiguous scanning beams (Theorem \ref{th:OpBeamc}), upper-bound and lower-bound on the optimal objective (Theorem \ref{th:sc1c}), and ES.}
\label{fig:exp3}
\vspace*{-0.3cm}
\end{figure}

\begin{figure}[t]
\centering
\includegraphics[ width=\linewidth, draft=false]{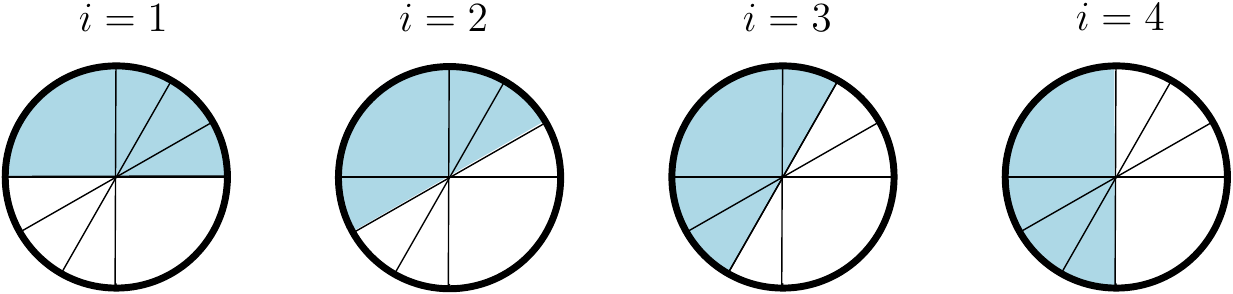}
\caption{Scanning beams and corresponding URs of the example in Fig. \ref{fig:exp3} when $b =4$.}
\label{fig:exp3b}
\vspace*{-0.6cm}
\end{figure}

\vspace*{-0.2cm}
\section{Conclusion}
In this paper, we have investigated the multi-user non-interactive beam alignment problem in mmWave systems where the objective is to minimize the average of expected width of remaining uncertainty regions on the AoDs after receiving users' feedbacks. The problem is investigated for two scenarios: i) when there are no constraints on the beam alignment scanning beams and ii) when they have to be contiguous. We have provided information theoretic bounds on the optimal performance and proposed methods to design optimal scanning beams for both scenarios. 
Additionally, we have proved that using optimal contiguous scanning beams will lead to contiguous uncertainty regions suggesting that a contiguous beam can also be used for data communications after the beam alignment. Furthermore, we have shown that the performance of our proposed approach improves that of exhaustive search, a well-known approach considered in 5G standard, for reasonable values of $b$, by at least a factor of two.

\balance
\bibliographystyle{IEEEtran}
\bibliography{ref}

\begin{thebibliography}{10}
\providecommand{\url}[1]{#1}
\csname url@samestyle\endcsname
\providecommand{\newblock}{\relax}
\providecommand{\bibinfo}[2]{#2}
\providecommand{\BIBentrySTDinterwordspacing}{\spaceskip=0pt\relax}
\providecommand{\BIBentryALTinterwordstretchfactor}{4}
\providecommand{\BIBentryALTinterwordspacing}{\spaceskip=\fontdimen2\font plus
\BIBentryALTinterwordstretchfactor\fontdimen3\font minus
  \fontdimen4\font\relax}
\providecommand{\BIBforeignlanguage}[2]{{%
\expandafter\ifx\csname l@#1\endcsname\relax
\typeout{** WARNING: IEEEtran.bst: No hyphenation pattern has been}%
\typeout{** loaded for the language `#1'. Using the pattern for}%
\typeout{** the default language instead.}%
\else
\language=\csname l@#1\endcsname
\fi
#2}}
\providecommand{\BIBdecl}{\relax}
\BIBdecl

\bibitem{mmWave-survey-nyu}
S.~Rangan, T.~S. Rappaport, and E.~Erkip, ``Millimeter-wave cellular wireless
  networks: Potentials and challenges,'' \emph{Proceedings of the IEEE}, vol.
  102, no.~3, pp. 366--385, March 2014.

\bibitem{rappaport2011state}
T.~S. Rappaport, J.~N. Murdock, and F.~Gutierrez, ``State of the art in
  60-{GHz} integrated circuits and systems for wireless communications,''
  \emph{Proceedings of the IEEE}, vol.~99, no.~8, pp. 1390--1436, 2011.

\bibitem{kutty2016beamforming}
S.~Kutty and D.~Sen, ``Beamforming for millimeter wave communications: An
  inclusive survey,'' \emph{IEEE Communications Surveys \& Tutorials}, vol.~18,
  no.~2, pp. 949--973, 2016.

\bibitem{akdeniz2014millimeter}
M.~R. Akdeniz, Y.~Liu, M.~K. Samimi, S.~Sun, S.~Rangan, T.~S. Rappaport, and
  E.~Erkip, ``Millimeter wave channel modeling and cellular capacity
  evaluation,'' \emph{IEEE Journal on Selected Areas in Communications},
  vol.~32, no.~6, pp. 1164--1179, 2014.

\bibitem{giordani2018tutorial}
M.~Giordani, M.~Polese, A.~Roy, D.~Castor, and M.~Zorzi, ``A tutorial on beam
  management for 3{GPP} {NR} at {mmWave} frequencies,'' \emph{IEEE
  Communications Surveys \& Tutorials}, 2018.

\bibitem{Lee2018}
J.~Lee, M.-D. Kim, J.-J. Park, and Y.~J. Chong, ``Field-measurement-based
  received power analysis for directional beamforming millimeter-wave systems:
  Effects of beamwidth and beam misalignment,'' \emph{ETRI Journal}, vol.~40,
  no.~1, pp. 26--38, 2018.

\bibitem{nitsche2015steering}
T.~{Nitsche}, A.~B. {Flores}, E.~W. {Knightly}, and J.~{Widmer}, ``Steering
  with eyes closed: {Mm-Wave} beam steering without in-band measurement,'' in
  \emph{Conference on Computer Communications (INFOCOM)}.\hskip 1em plus 0.5em
  minus 0.4em\relax IEEE, April 2015, pp. 2416--2424.

\bibitem{Shah1906:Robust}
S.~Shahsavari, M.~Khojastepour, and E.~Erkip, ``Robust beam tracking and data
  communication in millimeter wave mobile networks,'' in \emph{International
  Symposium on Modeling and Optimization in Mobile, Ad Hoc, and Wireless
  Networks}, Jun. 2019.

\bibitem{barati2016initial}
C.~N. Barati, S.~A. Hosseini, M.~Mezzavilla, T.~Korakis, S.~S. Panwar,
  S.~Rangan, and M.~Zorzi, ``Initial access in millimeter wave cellular
  systems,'' \emph{IEEE Transactions on Wireless Communications}, vol.~15,
  no.~12, pp. 7926--7940, 2016.

\bibitem{giordani2016comparative}
M.~Giordani, M.~Mezzavilla, C.~N. Barati, S.~Rangan, and M.~Zorzi,
  ``Comparative analysis of initial access techniques in {5G} {mmWave} cellular
  networks,'' in \emph{Annual Conference on Information Science and Systems
  (CISS)}.\hskip 1em plus 0.5em minus 0.4em\relax IEEE, 2016.

\bibitem{desai2014initial}
V.~Desai, L.~Krzymien, P.~Sartori, W.~Xiao, A.~Soong, and A.~Alkhateeb,
  ``Initial beamforming for {mmWave} communications,'' in \emph{48th Asilomar
  Conference on Signals, Systems and Computers}.\hskip 1em plus 0.5em minus
  0.4em\relax IEEE, 2014, pp. 1926--1930.

\bibitem{hussain2017throughput}
M.~Hussain and N.~Michelusi, ``Throughput optimal beam alignment in millimeter
  wave networks,'' in \emph{Information Theory and Applications Workshop (ITA),
  2017}.\hskip 1em plus 0.5em minus 0.4em\relax IEEE, 2017, pp. 1--6.

\bibitem{shah-pimrc}
S.~{Shahsavari}, M.~A.~A. {Khojastepour}, and E.~{Erkip}, ``Beam training
  optimization in millimeter-wave systems under beamwidth, modulation and
  coding constraints,'' in \emph{30th Annual International Symposium on
  Personal, Indoor and Mobile Radio Communications (PIMRC)}.\hskip 1em plus
  0.5em minus 0.4em\relax IEEE, Sep. 2019, pp. 1--7.

\bibitem{khosravi2019efficient}
S.~Khosravi, H.~S. Ghadikolaei, and M.~Petrova, ``Efficient beamforming for
  mobile mmwave networks,'' \emph{arXiv preprint arXiv:1912.12118}, 2019.

\bibitem{chiu2019active}
S.-E. Chiu, N.~Ronquillo, and T.~Javidi, ``Active learning and {CSI}
  acquisition for {mmWave} initial alignment,'' \emph{IEEE Journal on Selected
  Areas in Communications}, vol.~37, no.~11, pp. 2474--2489, 2019.

\bibitem{shabara2018linear}
Y.~Shabara, C.~E. Koksal, and E.~Ekici, ``Linear block coding for efficient
  beam discovery in millimeter wave communication networks,'' in
  \emph{Conference on Computer Communications (INFOCOM)}.\hskip 1em plus 0.5em
  minus 0.4em\relax IEEE, 2018, pp. 2285--2293.

\bibitem{klautau20185g}
A.~Klautau, P.~Batista, N.~Gonz{\'a}lez-Prelcic, Y.~Wang, and R.~W. Heath,
  ``{5G} {MIMO} data for machine learning: Application to beam-selection using
  deep learning,'' in \emph{Information Theory and Applications Workshop
  (ITA)}.\hskip 1em plus 0.5em minus 0.4em\relax IEEE, 2018, pp. 1--9.

\bibitem{Song2019}
X.~{Song}, S.~{Haghighatshoar}, and G.~{Caire}, ``Efficient beam alignment for
  millimeter wave single-carrier systems with hybrid {MIMO} transceivers,''
  \emph{IEEE Transactions on Wireless Communications}, vol.~18, no.~3, pp.
  1518--1533, 2019.

\bibitem{michelusi2018optimal}
N.~Michelusi and M.~Hussain, ``Optimal beam sweeping and communication in
  mobile millimeter-wave networks,'' \emph{arXiv preprint arXiv:1801.09306},
  2018.

\bibitem{michelusi-efficient}
M.~{Hussain} and N.~{Michelusi}, ``Energy-efficient interactive beam alignment
  for millimeter-wave networks,'' \emph{IEEE Transactions on Wireless
  Communications}, vol.~18, no.~2, pp. 838--851, Feb 2019.

\bibitem{hussain2017energy}
M.~Hussain and N.~Michelusi, ``Energy efficient beam-alignment in millimeter
  wave networks,'' in \emph{51st Asilomar Conference on Signals, Systems, and
  Computers}.\hskip 1em plus 0.5em minus 0.4em\relax IEEE, 2017.

\bibitem{hassan2018multi}
R.~A. Hassan and N.~Michelusi, ``Multi-user beam-alignment for millimeter-wave
  networks,'' in \emph{Information Theory and Applications Workshop
  (ITA)}.\hskip 1em plus 0.5em minus 0.4em\relax IEEE, 2018, pp. 1--7.

\bibitem{hussain2019energy}
M.~Hussain and N.~Michelusi, ``Energy-efficient interactive beam alignment for
  millimeter-wave networks,'' \emph{IEEE Transactions on Wireless
  Communications}, vol.~18, no.~2, pp. 838--851, 2019.

\bibitem{ramanathan2001performance}
R.~Ramanathan, ``On the performance of ad hoc networks with beamforming
  antennas,'' in \emph{Proceedings of ACM international symposium on Mobile ad
  hoc networking \& computing}, 2001, pp. 95--105.

\bibitem{bai2015coverage}
T.~Bai and R.~W. Heath, ``Coverage and rate analysis for millimeter-wave
  cellular networks,'' \emph{IEEE Transactions on Wireless Communications},
  vol.~14, no.~2, pp. 1100--1114, 2015.

\bibitem{fund2017}
F.~{Fund}, S.~{Shahsavari}, S.~S. {Panwar}, E.~{Erkip}, and S.~{Rangan},
  ``Resource sharing among {mmWave} cellular service providers in a vertically
  differentiated duopoly,'' in \emph{IEEE International Conference on
  Communications (ICC)}, May 2017, pp. 1--7.

\bibitem{fund2016}
F.~Fund, S.~Shahsavari, S.~S. Panwar, E.~Erkip, and S.~Rangan, ``Do open
  resources encourage entry into the millimeter wave cellular service market?''
  in \emph{Proceedings of the Eighth Wireless of the Students, by the Students,
  and for the Students Workshop}, ser. S3.\hskip 1em plus 0.5em minus
  0.4em\relax New York, NY, USA: Association for Computing Machinery, 2016, p.
  12–14.

\bibitem{shah2018}
S.~{Shahsavari}, A.~{Ashikhmin}, E.~{Erkip}, and T.~L. {Marzetta},
  ``Coordinated multi-point massive mimo cellular systems with sectorized
  antennas,'' in \emph{52nd Asilomar Conference on Signals, Systems, and
  Computers}, Oct 2018, pp. 2130--2135.

\bibitem{shah2017}
S.~{Shahsavari}, P.~{Hassanzadeh}, A.~{Ashikhmin}, and E.~{Erkip}, ``Sectoring
  in multi-cell massive mimo systems,'' in \emph{51st Asilomar Conference on
  Signals, Systems, and Computers}, Oct 2017, pp. 1050--1055.

\bibitem{cvx}
M.~Grant and S.~Boyd, ``Graph implementations for nonsmooth convex programs,''
  in \emph{Recent Advances in Learning and Control}, ser. Lecture Notes in
  Control and Information Sciences, V.~Blondel, S.~Boyd, and H.~Kimura,
  Eds.\hskip 1em plus 0.5em minus 0.4em\relax Springer-Verlag Limited, 2008,
  pp. 95--110, \url{http://stanford.edu/~boyd/graph\_dcp.html}.

\end{thebibliography}

\end{document}